\ifx\pdfoutput\undefined\else\pdfoutput=1\fi
\documentclass[preprint,12pt]{elsarticle}

\usepackage{amsmath,amssymb,amsthm}
\usepackage{booktabs}
\usepackage{graphicx}
\graphicspath{{figs/}{./}}
\usepackage{newtxtext,newtxmath}
\usepackage{xcolor}
\usepackage{algorithm}
\usepackage{algpseudocode}
\usepackage{url}
\usepackage[hidelinks]{hyperref}
\usepackage[protrusion=true,expansion=false]{microtype}

\journal{Software: Practice and Experience}

\newtheorem{theorem}{Theorem}
\newtheorem{proposition}[theorem]{Proposition}
\newtheorem{lemma}[theorem]{Lemma}

\newcommand{\rank}{\operatorname{rank}}
\newcommand{\unrank}{\operatorname{unrank}}
\newcommand{\eps}{\varepsilon}

\begin{document}

\begin{frontmatter}

\title{Memoization Without Keys: Compact, Out-of-Core\\
Tables for Functions of Sorted Arguments}

\author{Tamal Maharaj}
\ead{tamal@gm.rkmvu.ac.in}
\address{Department of Computer Science, Ramakrishna Mission Vivekananda
Educational and Research Institute (RKMVERI), Belur, West Bengal, India}

\begin{abstract}
Memoizing an expensive function of a sorted score vector is a
data-structure problem before it is a numerical one: at a billion
gridpoints, a hash map or a search tree spends most of its space on keys
the grid already determines. We describe an implemented memo table that stores
none. An entry's address is computed in closed form from the sorted
argument itself, so $N$ values occupy $N$ slots, the argument is
recoverable from the index, and the table can be memory-mapped and served
from a file larger than RAM. Against a chained hash map it uses
$5.7\times$ less memory at $37$M entries and $10.2\times$ less at
$1.9$B, answers queries up to $2.9\times$ faster and builds up to
$250\times$ faster; on 64 threads its construction needs no coordination;
a sharded hash gains only $1.17\times$. Against an open-addressing
table with inline keys it is $4$--$7\times$ smaller and $100\times$
faster to build but $1.5\times$ slower to query, a deficit we trace to
the $O(d)$ index arithmetic. At $22$\,GB on a
$16$\,GB desktop it serves each query in one disk access, where no
key-storing container can be built; and its order-preserving addressing keeps a
perturbation workload on the same pages that a hashed layout scatters. The closed form exists because the key set is the
multiset combinations, whose index is the combinatorial number system.
Memoizing Plackett--Luce normalization runs $25$--$55\times$ faster than
Newton's method; memoizing $\alpha$-entmax thresholds does not pay. The
contrast says when this structure is worthwhile.
\end{abstract}

\begin{keyword}
memoization \sep algorithm engineering \sep experimental evaluation \sep
ranking and unranking \sep external memory \sep memory-mapped I/O
\end{keyword}

\end{frontmatter}

\section*{Practitioner Points}
\begin{itemize}
\item When a lookup table's keys are fixed in advance by the shape of the
grid, storing them is waste. Computing each array address arithmetically
from the argument removes the keys entirely and shrinks the table by four
to ten times, with the factor growing as the table grows.
\item The same addressing lets any number of threads fill the table with
no locks, and lets it be memory-mapped and served from a file larger than
RAM at one disk access per query, at sizes where a key-storing container
cannot be built at all.
\item Tabulation only pays when the exact computation is genuinely
expensive. Memoizing an iterative transcendental solve was 25 to 55 times
faster; memoizing a threshold that has a short closed form was slower than
computing it.
\end{itemize}

\section{Introduction}\label{sec:intro}

Memoizing an expensive function at scale is a data-structure problem. A
grid fine enough to be useful has billions of points. The usual containers all waste space in the same way: a hash map or a search tree stores the key of every entry next to its value, although the grid already determines what the keys are. At $10^9$ entries that overhead decides
whether the table fits in memory.

This paper starts from the view that, for a function of a sorted
argument, the keys need not be stored, and reports what the memo table
looks like when they are not. The table holds $N$ values in $N$ slots, each addressed by $O(d)$ arithmetic on the argument itself; any number of threads can fill it without coordination, and it can be memory-mapped and queried past the limits of RAM at one disk access per lookup. Against a chained hash-map memo it is $5.7$--$10.2\times$ smaller, and the gap widens with $N$; it answers queries $1.1$--$2.9\times$ faster and builds $48$--$250\times$ faster. Against an open-addressing table with inline keys, the strongest simple baseline, it is $4$--$7\times$ smaller and $100\times$ faster to build, and $1.5\times$ slower to query, because it pays an $O(d)$ index on top of the same single memory access. We measure this deficit and account for it.

The functions we have in mind arise across statistical estimation and
machine learning. Each takes a sorted vector of scores, and its value
depends far more on the leading entries than on the tail. Fitting a
discrete choice model is the standard instance. Given the relative
strengths $a_1\ge a_2\ge\dots\ge a_\ell$ of $\ell$ alternatives, we want
the probability $p_1$ that the strongest is chosen. Under a Luce model
\cite{luce1959,plackett1975} this means solving $\sum_i p_1^{1/a_i}=1$, a
transcendental equation with no closed form, so every evaluation costs an
iterative root-find. Estimation loops call $f$ on tens of millions of such
vectors. Biswas and Regan \cite{biswas2015} introduced this
setting.\footnote{The present author is the first author of
\cite{biswas2015}, published under the name Tamal T.\ Biswas. That work is
discussed in the third person throughout for readability; where the
present paper corrects or sharpens its claims, the criticism is
self-directed.} They observed that two features of the problem make memoization unusually attractive. The argument may be taken as sorted, because
the function is symmetric and only the ranking matters. And its
sensitivity decays with rank, because alternatives far down the ordering
barely affect the outcome. Their answer was a tapered grid: fine
resolution for the leading coordinates, progressively coarser for later
ones, one stored value per gridpoint.

The same shape turns up in current systems. Listwise preference losses
used to align language models are Plackett--Luce likelihoods over ranked
candidate lists. Sparse-attention variants replace softmax with
$\alpha$-entmax \cite{peters2019}, whose normalizing threshold comes from
solving a scalar equation over sorted logits, and accelerating that solve
is itself the subject of dedicated work \cite{goncalves2025}. In each case
the argument is sorted, influence decays with rank, and the evaluation
count is enormous.

The keys can be dispensed with once the key set is recognized for what it is. The keys of the tapered grid are sorted vectors over a fixed alphabet, that is, the combinations with repetition. So the index that the original
scheme computes from precomputed node-count tables is the lexicographic
rank in the classical combinatorial number system
\cite{lehmer1964,buckles1977,knuth2011}. Stating that identification has
three practical consequences. The rank has a closed form that needs $O(d)$
arithmetic and no scheme-specific tables. The map is invertible, so an
argument can be reconstructed from a bare integer. And the structure
becomes a flat array of values addressed by content.

\paragraph{Contributions.}
\begin{enumerate}
\item \textbf{A key-free memo table (Sec.~\ref{sec:structure}).} The
address of an entry can be computed from that entry's own argument, and
the argument can be recovered from the address. The memo table therefore reduces to its payloads: $N$ values in $N$ slots, with no keys, pointers or per-entry metadata. The array is contiguous, so
it can be memory-mapped and served from a file larger than RAM at one disk
access per query. It can also be filled in any order, so construction
needs no coordination between threads.
\item \textbf{A systems evaluation at scale (Sec.~\ref{sec:experiments}).}
We compare against a chained hash map and an inline-key open-addressing table, up to $N=1.9$B entries in RAM and $5.6$B out of core. The memory advantage of the values-only array grows without bound in $N$, from $5.7\times$ to $10.2\times$ over the chained map and $4$--$7\times$ over the flat table. Against the chained map, queries are $1.1$--$2.9\times$ faster; against the flat table they are $1.5\times$ slower, and we decompose the query into index arithmetic (23\,ns at $d{=}13$) plus one memory access, and the index accounts for the difference. Construction is $48$--$250\times$ faster than the chained map and $100\times$ faster than the flat table. On 64 threads a build with no
coordination reaches the machine's memory-bandwidth ceiling, while the
sharded hash gains only $1.17\times$ (\S\ref{sec:parallel}). At $22$\,GB
on a $16$\,GB desktop the structure answers each query in one disk access,
a size at which no key-storing structure can be built.
\item \textbf{Table-free indexing and unranking (Sec.~\ref{sec:index}).}
This is the mechanism underlying the first two contributions. The index of \cite{biswas2015} is
the lexicographic rank in the combinatorial number system
\cite{lehmer1964,buckles1977,knuth2011}. A hockey-stick collapse gives a
closed-form $O(d)$ rank that uses only a Pascal table, which removes the
$O(Bd)$ (fixed) and $O(Bd^2)$ (selective) preprocessing of the original
scheme. An $O(d\log B)$ inverse reconstructs the argument from a bare
index. That inverse is the operation the original scheme lacked, and the
one that makes key-free storage and order-free construction possible. The
closed form also lifts a correctness restriction: the original Eq.~(3)
indexes correctly only when the first coordinate is pinned to its minimum.
\item \textbf{Taper-design guidance (Sec.~\ref{sec:design}).} Under
geometric influence decay $g_i = r^{\,i-1}$, the optimal geometric
refinement ratio is $s=r$. ``Halve the branching each level''
\cite{biswas2015} is therefore optimal only at $r=\tfrac12$. We derive a
finite-$\eps$ penalty formula accurate to a few percent, whose $\eps\to0$
limit is the classical rate-balancing penalty
\cite{griebel2000,gopal2019,xiang2023}.
\item \textbf{Applications, positive and negative
(Sec.~\ref{sec:apps}).} Memoizing the Plackett--Luce normalization
\cite{luce1959,plackett1975,biswas2015} runs $25$--$55\times$ faster than
Newton. Memoizing $\alpha$-entmax thresholds \cite{goncalves2025} does not pay, and the contrast between the two cases gives an explicit rule for when a structure of this kind is worthwhile.
\end{enumerate}

\emph{What we do not claim.} Ranking of combinations is classical
\cite{lehmer1964,buckles1977,knuth2011,kreher1999,cover1973}; succinct
rank-indexed arrays are standard \cite{jacobson1989,raman2002};
dimension-robust approximation under decaying coordinate influence is the
established weighted-space/tractability regime
\cite{sloan1998,novak2008,gnewuch2011,cohen2011}; and rate-matching
penalties are known as a phenomenon \cite{griebel2000,xiang2023}. Our
contribution is the construction that combines them: an addressable,
mmap-friendly, key-free memo structure whose system behaviour we measure,
together with the sharpened finite-$\eps$ penalty for that construction.

\section{Preliminaries}\label{sec:prelim}

\paragraph{The tapered grid.} Arguments are non-decreasing vectors
$x_1\le x_2\le\dots\le x_d$ in $[0,1]$, obtained from the application's
scores by a monotone transform that places the most influential entries
near $0$; sequences shorter than the truncation depth $d$ are padded with
the ceiling value $1$, whose marginal influence is nil. Coordinates are
quantized to a common \emph{spacing vector}
$\vec s=(s_1{=}0<s_2<\dots<s_B{=}1)$ of $B$ admissible values, spaced
finely near $0$ and coarsely near $1$. A \emph{warped} (or
\emph{selective}) variant additionally shrinks the admissible set with
depth, so that deeper coordinates, which matter less, are represented more coarsely. We treat this in \S\ref{sec:warped}.

\paragraph{The key set.} Writing $m_i\in\{1,\dots,B\}$ for the column
index of $x_i$ in $\vec s$, monotonicity of $x$ and of $\vec s$ makes the
stored keys exactly
\[
\mathcal{M}(B,d)=\{m=(m_1\le m_2\le\dots\le m_d):m_i\in\{1,\dots,B\}\},
\]
whose cardinality is $N=|\mathcal{M}(B,d)|=\binom{B+d-1}{d}$.
These are the \emph{combinations with repetition} of size $d$ from $B$
symbols. The standard bijection $\phi(m)=(m_1,\,m_2{+}1,\dots,m_d{+}d{-}1)$
maps them onto the $d$-subsets of $\{1,\dots,B{+}d{-}1\}$ and preserves
lexicographic order, so any statement about ranking $d$-subsets transfers
verbatim.

\paragraph{Consequence for the original tables.} The scheme of
\cite{biswas2015} computes its array index from a precomputed table
$V_{i,b}$ counting the nodes at depth $i$ below core branch $b$. Under the
identification above these counts are binomial coefficients,
\[
V_{i,b}=\binom{B-b+i-2}{\,i-2\,},
\]
that is, the table is Pascal's triangle in disguise; its published
instances agree entry-by-entry (e.g.\ $V_{5,1}=\binom{19}{3}=969$,
$V_{4,2}=\binom{17}{2}=136$). The same identity accounts for the size of
that paper's production file: with $B{=}16$ and depth $15$, the first
coordinate pinned to $0$ leaves $14$ free positions and
$\binom{29}{14}=77{,}558{,}760$ keys, which is the reported entry count. Once the tables are recognized as binomials they can be removed.

\section{Table-Free Indexing and Unranking}\label{sec:index}

\subsection{Closed-form rank}

\begin{proposition}[Table-free rank]\label{prop:rank}
Let $m\in\mathcal{M}(B,d)$ and adopt the sentinel $m_0=1$. The
lexicographic rank of $m$ within $\mathcal{M}(B,d)$ is
\begin{equation}\label{eq:rank}
\rank(m)\;=\;\sum_{i=1}^{d}\left[
\binom{B-m_{i-1}+d-i+1}{\,d-i+1\,}-\binom{B-m_{i}+d-i+1}{\,d-i+1\,}\right].
\end{equation}
\end{proposition}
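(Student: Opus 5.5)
The plan is to count, in the usual way, the elements of $\mathcal{M}(B,d)$ that precede $m$ lexicographically, and then collapse each level's contribution with the hockey-stick identity. An element $m'$ precedes $m$ exactly when there is a first position $i$ with $m'_j=m_j$ for $j<i$ and $m'_i<m_i$. Since $m'$ must be non-decreasing, $m'_i$ ranges over $m_{i-1}\le c<m_i$; the sentinel $m_0=1$ makes the case $i=1$ uniform with the others. Once $m'_i=c$ is fixed, the tail $(m'_{i+1},\dots,m'_d)$ is an arbitrary non-decreasing sequence of length $d-i$ with entries in $\{c,\dots,B\}$, a set of $B-c+1$ symbols. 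So the tail count is $|\mathcal{M}(B-c+1,d-i)|=\binom{B-c+d-i}{d-i}$, using the cardinality formula from the Preliminaries. This gives
\[
\rank(m)=\sum_{i=1}^{d}\;\sum_{c=m_{i-1}}^{m_i-1}\binom{B-c+d-i}{\,d-i\,},
\]
with rank counted from $0$. This double sum is the node-count form of the original scheme.

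The second step evaluates the inner sum in closed form. Put $k=d-i$ and $n=B-c+k$. As $c$ runs from $m_{i-1}$ to $m_i-1$, the upper index $n$ runs downward from $B-m_{i-1}+k$ to $B-m_i+k+1$. The hockey-stick identity $\sum_{n=k}^{M}\binom{n}{k}=\binom{M+1}{k+1}$, applied as a difference of two partial sums, gives
\[
\sum_{n=B-m_i+k+1}^{B-m_{i-1}+k}\binom{n}{k}=\binom{B-m_{i-1}+k+1}{k+1}-\binom{B-m_i+k+1}{k+1}.
\]
Substituting $k+1=d-i+1$ yields exactly the bracketed term in \eqref{eq:rank}. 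When $m_i=m_{i-1}$ the range is empty and both sides vanish, so equal consecutive coordinates need no separate treatment.

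The main obstacle is index bookkeeping rather than any real difficulty. Three points need care: that the lower limit of $c$ is $m_{i-1}$ and not $m_{i-1}+1$, since repetition is allowed; that the tail alphabet size is $B-c+1$, since $c$ itself is admissible; and that the hockey-stick endpoints stay nonnegative, which holds because $B-m_i+k+1\ge k+1$. I would sanity-check the conventions on small cases. For $d=1$ the formula gives $\binom{B}{1}-\binom{B-m_1+1}{1}=m_1-1$. For $m=(B,\dots,B)$ the sum telescopes to $\binom{B+d-1}{d}-1=N-1$, confirming that the ranks fill $\{0,\dots,N-1\}$. Alternatively, one could transport the classical combinatorial-number-system rank of $d$-subsets through $\phi$ and simplify, but the direct count above is shorter and self-contained.
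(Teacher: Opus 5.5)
Your proposal is correct and follows the paper's proof. Both classify predecessors by the first differing position $i$, count the $\binom{B-v+d-i}{d-i}$ non-decreasing completions for each admissible value $v\in[m_{i-1},m_i-1]$, and collapse each inner sum with the hockey-stick identity. Your added bookkeeping (the lower endpoint $B-m_i+k+1\ge k+1$, the $d=1$ check, the all-$B$ check) only makes explicit what the paper leaves implicit.
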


\begin{proof}
Count the keys preceding $m$ lexicographically, classified by the first
position $i$ at which they differ. Such a key agrees with $m$ on
$m_1,\dots,m_{i-1}$ and has some smaller value $v$ at position $i$;
monotonicity forces $v\ge m_{i-1}$, so $v$ ranges over
$[m_{i-1},\,m_i-1]$. The remaining $d-i$ positions are then free subject
only to being non-decreasing and at least $v$, which admits
$\binom{B-v+d-i}{\,d-i\,}$ completions. The classes over $i$ are disjoint
and exhaust the predecessors, so
\[
\rank(m)=\sum_{i=1}^{d}\ \sum_{v=m_{i-1}}^{m_i-1}\binom{B-v+d-i}{\,d-i\,}.
\]
Each inner sum runs over consecutive entries of one diagonal of Pascal's
triangle, so the hockey-stick identity
$\sum_{t=lo}^{hi}\binom{t}{k}=\binom{hi+1}{k+1}-\binom{lo}{k+1}$ collapses
it to the difference of two binomials, giving \eqref{eq:rank}.
\end{proof}

Two properties of \eqref{eq:rank} matter later. Only positions where $m$ strictly
increases contribute, since $m_{i-1}=m_i$ makes the bracket vanish; and
the only precomputation is a Pascal table of $O((B{+}d)^2)$ words, shared by every grid instance instead of being rebuilt per scheme.

\paragraph{A correctness restriction removed.} Equation~(3) of
\cite{biswas2015} sums over $i=1,\dots,d-1$ and therefore never accounts
for movement of the first coordinate; it indexes correctly only on that
paper's own domain, where $x_1=0$ is pinned by construction. Our attempt
to reproduce its published index for a key with $m_1>1$ disagrees with
enumeration, whereas \eqref{eq:rank} indexes the full monotone domain. For applications that pin the first coordinate (both of ours do) the two agree, and \eqref{eq:rank} specializes to the pinned case by dropping the
$i=1$ term.

\paragraph{Verification.} We checked \eqref{eq:rank} against brute-force
lexicographic enumeration for every key of every grid with $B\le 8$,
$d\le 6$: $6{,}426$ keys, all indices matching, with
$\unrank(\rank(m))=m$ throughout. It reproduces the worked example of
\cite{biswas2015} (index $328$) and, via the warped ranker of
\S\ref{sec:warped}, that paper's selective example (index $400$).

\subsection{Unranking}\label{sec:unrank}
The map is invertible by the standard greedy decode: recover coordinates
left to right, at each step advancing the candidate value while the block
of completions it heads still lies below the residual index.

\begin{algorithm}[ht]
\caption{$\unrank(r,B,d)$---recover the key from its index}
\label{alg:unrank}
\begin{algorithmic}[1]
\State $v\gets 1$
\For{$i\gets 1$ \textbf{to} $d$}
  \State $L\gets d-i$ \Comment{free positions after $i$}
  \While{$r\ge\binom{B-v+L}{L}$}
    \State $r\gets r-\binom{B-v+L}{L}$;\quad $v\gets v+1$
  \EndWhile
  \State $m_i\gets v$ \Comment{next coordinate is $\ge v$}
\EndFor
\State \Return $m$
\end{algorithmic}
\end{algorithm}

Because $v$ never decreases, the loop visits each of the $B$ values at
most once in total, giving $O(B{+}d)$ time; replacing the linear scan by
binary search over the same monotone predicate gives $O(d\log B)$. (The rank direction is $O(d)$ unconditionally; the inverse is $\Theta(d)$ only when $B=O(d)$.)

The original scheme has no inverse, and without one the keys have to be stored. Algorithm~\ref{alg:unrank} changes three things. Storage becomes \emph{key-free}: the array holds values only, and
any argument can be regenerated on demand. Construction becomes
\emph{order-free}: entry $r$ can be computed from $r$ alone, so $T$
workers may fill disjoint index ranges with no locks, no shared state and
no communication (\S\ref{sec:parallel}). And enumeration becomes
streaming: the whole table can be swept in $O(1)$ memory, since the
argument of each successive slot is recoverable as it is read.

\subsection{Warped and selective schemes}\label{sec:warped}
When the scheme reduces branching with depth, the admissible set shrinks
along a nested chain $S_1\supseteq S_2\supseteq\dots\supseteq S_d$ (a
reduction at depth $t$ retaining every $2^{R}$-th column). The keys are
then the non-decreasing sequences with $m_i\in S_i$. Nothing essential
changes: let $A(i,v)$ count the completions of positions $i,\dots,d$ that
are non-decreasing and begin at or above $v$. A single backward pass fills
$A$ and its prefix sums in $O(dB)$ time, after which rank and unrank are
again $O(d)$ lookups, by the same first-difference argument that proves
Proposition~\ref{prop:rank} (the uniform case is $S_i\equiv\{1,\dots,B\}$,
where $A$ is Pascal's triangle and the table disappears entirely).

This replaces the recursive per-depth table of \cite[Eq.~(7)]{biswas2015},
whose construction costs $O(Bd^2)$. We verified the nested ranker against
exhaustive lexicographic enumeration on that paper's own selective
configuration ($B{=}17$, $d{=}5$, reduction at depth 4; $6{,}501$ keys) and
on $30$ randomly generated nested configurations, with rank and unrank
agreeing everywhere. It reproduces the paper's published index $400$ for
its worked example, which also settles a point of interpretation: the selective index is the lexicographic rank of the nested key set, so no bespoke recursion is needed to compute it.

\subsection{Complexity summary}
Table~\ref{tab:complexity} collects the costs of the two schemes.

\begin{table}[ht]
\caption{Cost of the index. The original scheme builds a node-count table
per grid instance; ours needs only a Pascal table, which is independent of
the spacing, the taper and the payload, and is shared across all grids.}
\label{tab:complexity}
\centering\small
\begin{tabular}{lccc}
\toprule
& preprocessing & extra space & unrank \\
\midrule
fixed \cite{biswas2015} & $O(Bd)$ & $O(Bd)$ & --- \\
selective \cite{biswas2015} & $O(Bd^{2})$ & $O(Bd)$ & --- \\
this paper & $O((B{+}d)^{2})$ shared & $O((B{+}d)^{2})$ & $O(d\log B)$ \\
\bottomrule
\end{tabular}
\end{table}

\section{The Flat Array as a Memoization Structure}\label{sec:structure}

This section states what the resulting structure is in standard terms. None
of it is new; we include it because it explains the measurements of
\S\ref{sec:experiments} and delimits what the construction does and does not provide.

\paragraph{Space.} The structure is an array of $N$ payloads of $w$ bits
plus a Pascal table of $O((B{+}d)^2)$ words that is independent of the
payload and shared across grids. For adversarial (incompressible)
payloads, $N\cdot w$ bits is the information-theoretic minimum: distinct
payload assignments must produce distinct memory images, so any
representation permitting exact recovery needs $\log_2 2^{Nw}$ bits. The
array attains it, making the structure succinct in the usual sense
\cite{jacobson1989,raman2002}; note that the ``zero redundancy'' applies
to the payload, the additive table being independent of $N$.

Against the baselines we measure, the separation is straightforward
accounting: a container that stores each key explicitly, as \texttt{unordered\_map} over packed columns and every textbook $k$-d tree do, occupies at least $N(w+d\lfloor\log_2 B\rfloor)$ bits, plus pointer
inventory. This is the overhead our measurements show
(\S\ref{sec:scaling}). It should not be overstated as a general lower
bound, however: \emph{retrieval} structures for arbitrary static key sets
\cite{pagh2001,dietzfelbinger2008} store a function in $(1+o(1))Nw$ bits
with $O(1)$ probes and no keys either. What the structured key set offers over those is not a saving in bits but determinism, zero construction cost, order preservation and invertibility.

\paragraph{The rank as a hash.} Equation~\eqref{eq:rank} is a
\emph{minimal perfect hash} for $\mathcal{M}(B,d)$: a bijection onto
$\{0,\dots,N-1\}$, order-preserving by construction, computable in $O(d)$ and, unlike general minimal perfect hash functions, invertible
(\S\ref{sec:unrank}). The $\Omega(N)$-bit description lower bounds for
minimal perfect hashing \cite{fredman1984,mehlhorn1982} and the
$\Theta(N\log\log\log u)$ bounds for monotone variants
\cite{belazzougui2009,assadi2023} are not contradicted: they quantify the
cost of identifying \emph{which} key set was given, and here the family is
a single set determined by $(B,d)$. The operative property is that the
lexicographic prefix-counting function of this family has a closed form;
a cheap description alone would not suffice.

\paragraph{Order preservation and locality.} A general minimal perfect hash scatters neighbouring keys across the table, which is inherent in hashing. The lexicographic rank keeps them together, and this matters more than one might expect, because it is what allows the structure to be used from a disk rather than only stored on one.

\begin{lemma}[Prefix contiguity]\label{lem:prefix}
Fix $j\in\{0,\dots,d\}$ and a non-decreasing prefix $(m_1,\dots,m_j)$.
The keys of $\mathcal{M}(B,d)$ that extend this prefix occupy the
contiguous rank interval
$[\rank(m_1,\dots,m_j,m_j,\dots,m_j),\ \rank(m_1,\dots,m_j,B,\dots,B)]$,
whose length is $\binom{B-m_j+d-j}{d-j}$.
\end{lemma}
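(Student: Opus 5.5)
The plan rests on one observation: lexicographic order on $\mathcal{M}(B,d)$ compares the first $j$ coordinates before any later one. Write $P$ for the set of keys extending the fixed prefix $p=(m_1,\dots,m_j)$.

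The first step is to show that $P$ is an interval in lexicographic order. Take $m,m''\in P$ and a key $m'$ with $m<m'<m''$. Suppose $m'$ disagreed with $p$, and let $k\le j$ be the first position where it does. If $m'_k<m_k$, then $m'<m$, because $m$ agrees with $p$ at every position up to $k$. If $m'_k>m_k$, then $m'>m''$ for the same reason. Both contradict the choice of $m'$, so $m'\in P$ and $P$ is convex. Since $\rank$ is a bijection onto $\{0,\dots,N-1\}$, and it is order-preserving by Proposition~\ref{prop:rank}, being exactly the lexicographic position, the ranks of $P$ form a contiguous integer interval.

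The second step identifies the endpoints. A completion of $p$ needs $m_{j+1}\ge m_j$, non-decreasing entries, and entries at most $B$. Among such completions, the constant tail $(m_j,\dots,m_j)$ is lexicographically smallest, since every coordinate takes its least admissible value. The tail $(B,\dots,B)$ is largest. This gives the two endpoints in the statement.

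The third step is the length, which is $|P|$. The tails of length $L=d-j$ are the non-decreasing sequences over the $B-m_j+1$ values $\{m_j,\dots,B\}$. There are
\[
\binom{(B-m_j+1)+L-1}{L}=\binom{B-m_j+d-j}{d-j}
\]
of them, by the multiset count used in \S\ref{sec:prelim} and in the proof of Proposition~\ref{prop:rank}. As a cross-check, one could instead subtract the two endpoint ranks using \eqref{eq:rank}. For the upper endpoint, every bracket from position $j+1$ on telescopes to the term $\binom{B-m_j+d-j}{d-j}-\binom{d-j}{d-j}$. For the lower endpoint, all brackets after position $j$ vanish. The difference is therefore $\binom{B-m_j+d-j}{d-j}-1$, so the closed interval contains the claimed number of keys.

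Two points need care. The first is the degenerate case $j=0$: with the sentinel $m_0=1$ the formula gives $\binom{B-1+d}{d}=N$, which is the whole table. The second is stating precisely that "length" here means the number of ranks in the closed interval. I expect the convexity argument to be the only step with real content, and it is short.
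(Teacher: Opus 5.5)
Your proposal is correct and follows the same route as the paper's proof. Contiguity comes from lexicographic order comparing the prefix before the tail, the endpoints are obtained by padding with $m_j$ and with $B$, and the length is the number of multisets of size $d-j$ drawn from the $B-m_j+1$ symbols $\{m_j,\dots,B\}$. Your explicit convexity argument, the $j=0$ sentinel case and the cross-check via \eqref{eq:rank} are sound elaborations of steps the paper leaves implicit. One small wording point: at the upper endpoint the brackets do not telescope; only the $i=j+1$ bracket survives, and it equals $\binom{B-m_j+d-j}{d-j}-1$, so your value is right.
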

\begin{proof}
Lexicographic order compares prefixes first, so all extensions of a
common prefix are consecutive in it. The least extension pads with
$m_j$ and the greatest with $B$. The number of non-decreasing completions
of length $d-j$ over the alphabet $\{m_j,\dots,B\}$ is the number of
multisets of that size drawn from $B-m_j+1$ symbols, which is
$\binom{B-m_j+d-j}{d-j}$.
\end{proof}

Two consequences follow. A query that fixes the leading $j$ coordinates and asks for every completion is a sequential scan of one interval, and only its two endpoints need computing. And a change to coordinate $i$ alone, with the prefix before it held fixed, moves the
address within the prefix-$(i-1)$ block, whose length
$\binom{B-m_{i-1}+d-i+1}{d-i+1}$ shrinks combinatorially as $i$
approaches $d$: a step in the last coordinate moves the address by one
entry, a step in the penultimate coordinate by at most $B$, and so on. The
addressing therefore places the tail coordinates in the low-order digits
of the address, and these are the coordinates the taper resolves most coarsely. A workload that perturbs a key a little at a time, as an
iterative estimation loop does, revisits the same pages when it changes a tail coordinate and jumps only when it changes a leading one. A hash-addressed
layout has no such property for any coordinate. \S\ref{sec:ooc} measures
the difference on the out-of-core file.

\paragraph{Memory hierarchy.} In the external-memory model
\cite{aggarwal1988,frigo1999}, with the Pascal table resident (a few
kilobytes, comfortably L1-sized) a query performs its $O(d)$ arithmetic
with no transfers and then exactly one block transfer for the payload;
reconstructing the argument costs none. The algorithm uses no knowledge of
block size, so the bound is cache-oblivious. Any comparison-based search
over $N$ stored keys instead needs $\Omega(\log_{B_{\text{blk}}}N)$
transfers, and a chained hash table performs at least two dependent accesses, a bucket cell and then a node, with no locality between them. This explains the direction and bounds the magnitude of the
latency gaps we measure, but not their exact constants: an
open-addressing table with inline keys can achieve one access per query,
and the observed ratios also reflect footprint, TLB pressure and allocator
behaviour. We therefore report the latency advantage as measured rather
than as predicted.

\section{Choosing the Taper: Design Guidance}\label{sec:design}

We model the influence of coordinate $i$ by $|\partial f/\partial x_i|\le g_i=G\,r^{\,i-1}$, allocate $n_i$ points to level $i$, require the worst-case first-order error $\sum_i g_i/(2n_i)$ to stay within a budget $\eps$, and minimize the table size through $\ln|U|=\sum_i\ln n_i$.

\paragraph{Water-filling optimum.} The optimum equalizes the per-level error at $e^\ast\approx\eps/k$ over the $k$ active levels, so that $n_i^\ast\propto g_i$; the active depth satisfies $k\,a=L'+\ln k+O(1)$ with $a=\ln(1/r)$ and $L'=\ln\!\big(1/(2\eps)\big)$, and the resulting size is $\ln|U^\ast|=\tfrac{a}{2}k^2+O(k)$. Substituting $t_i=\ln n_i$ makes the budget convex and the objective linear. The Karush--Kuhn--Tucker conditions are then sufficient, and they equalize the per-level error on the active set. This is the classical reverse water-filling, or optimal bit-allocation, solution \cite{huang1963,berger1971,gersho1992}, and we claim no novelty for it. Details are given in \ref{app:penalty}.

\begin{theorem}[Finite-$\eps$ taper-mismatch penalty]\label{thm:penalty}
Let $g_i=r^{\,i-1}$ and consider the geometric schedule
$n_i=n_1 s^{\,i-1}$, truncated at $n_i\ge1$, with the minimal feasible
$n_1$ for error budget $\eps$. With $a=\ln(1/r)$, $b=\ln(1/s)$,
$L'=\ln\!\big(1/(2\eps)\big)$, and $k$ the active depth of the
water-filling optimum (the fixed point $ka=L'+\ln k+O(1)$),
\begin{equation}\label{eq:penalty}
\frac{\ln|U_s|}{\ln|U^\ast|}
\;=\;\underbrace{\max\!\Big(\frac{b}{a},\frac{a}{b}\Big)}_{\rho_\infty(s)}
\cdot\left[\frac{L'+\ln C_{\lessgtr}}{L'+\ln k}\right]^{2}+\,O(1/L'),
\end{equation}
where $C_< = \frac{1}{\bar c\,(1-s/r)}+\frac{r}{1-r}$ for $s<r$
(with $\bar c\in[1,1/s)$ the truncation offset) and
$C_> = \frac{1}{1-r/s}$ for $s>r$.
\end{theorem}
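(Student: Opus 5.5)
The plan is to compute $\ln|U_s|$ and $\ln|U^\ast|$ separately to leading order in $L'$, keeping the $O(1)$ additive constants that multiply the leading term, and then divide. For the optimum I take the water-filling result quoted before the theorem: $\ln|U^\ast|=\tfrac a2 k^2+O(k)$ with $ka=L'+\ln k+O(1)$. This gives
\[
\ln|U^\ast|=\frac{(L'+\ln k)^2}{2a}+O(L').
\]
That expression is the denominator of the bracket in \eqref{eq:penalty}.

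For the geometric schedule, write $n_i=n_1 s^{i-1}$ and let $K$ be the truncation depth, the last level with $n_i\ge1$, so $K\approx 1+\ln n_1/b$. The error is
\[
\sum_{i\le K}\frac{r^{i-1}}{2n_1 s^{i-1}}+\sum_{i>K}\frac{r^{i-1}}{2},
\]
where levels beyond $K$ carry one point, or equivalently contribute their full influence. I will split into cases by the sign of $b-a$.

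\emph{Case $s>r$.} Here the ratio $r/s<1$, so the first sum is a convergent geometric series, $\frac{1}{2n_1}\cdot\frac{1}{1-r/s}$ up to exponentially small tail corrections, and the untruncated levels contribute $O(r^K)$. The minimal feasible $n_1$ is therefore $\ln n_1=L'+\ln C_>+o(1)$. The size is $\sum_{i\le K}(\ln n_1-(i-1)b)\approx (\ln n_1)^2/(2b)$, so
\[
\ln|U_s|=\frac{(L'+\ln C_>)^2}{2b}+O(L').
\]
Dividing by the optimum gives $\frac ab\bigl[\frac{L'+\ln C_>}{L'+\ln k}\bigr]^2$. Since $a/b=\max(b/a,a/b)$ exactly when $s>r$, this is the claim.

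\emph{Case $s<r$.} Now $s/r<1$ and the first sum is dominated by its last term, near $i=K$, where $n_K=\bar c\in[1,1/s)$ is the truncation offset. The sum is about $\frac{r^{K-1}}{2\bar c}\cdot\frac{1}{1-s/r}$. The tail adds $\frac{r^{K}}{2(1-r)}$, which equals $\frac{r^{K-1}}{2}\cdot\frac{r}{1-r}$. The constraint therefore reads $r^{K-1}C_<\le 2\eps$, i.e.\ $(K-1)a=L'+\ln C_<$. The size is $\sum_{i\le K}\ln n_i\approx b(K-1)^2/2$, which gives
\[
\ln|U_s|=\frac{b}{2a^2}(L'+\ln C_<)^2+O(L'),
\]
and the ratio is $\frac ba[\cdots]^2$ as stated.

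In both cases the squared bracket equals $1+O(\ln L'/L')$. The discarded terms are $O(L')$ against a leading term $\Theta(L'^2)$, which produces the $O(1/L')$ remainder. The main obstacle I expect is the $s<r$ case. There, $\bar c$ depends on $\eps$ through the fractional part of $\ln n_1/b$, and the discreteness of $K$ must be handled carefully so that the $\ln C_<$ constant is correct rather than absorbed into $O(L')$. I would treat $\bar c$ as a free parameter in $[1,1/s)$ and verify that every rounding error enters $\ln|U_s|$ only at order $O(L')$. I would also confirm that the minimal-feasible-$n_1$ condition is tight at the level of constants, using the monotonicity of the error in $n_1$.
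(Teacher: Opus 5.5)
Your proposal is correct and is essentially the paper's own derivation in the appendix. It uses the same split by the sign of $b-a$, the same feasibility constants ($C_<$ from the last active level plus the tail with offset $\bar c$, and $C_>$ from the series converging from the first level) and the same $\tfrac b2 m^2+O(m)$ count for $\ln|U_s|$, divided by $\tfrac a2 k^2+O(k)$; the paper re-derives that last quantity from the KKT conditions rather than quoting it. One remark: the $\ln C_{\lessgtr}$ terms enter $\ln|U_s|$ only at order $O(L')$, which is exactly the order you discard, so the careful tracking of $\bar c$ that you flag as the main obstacle is not needed for the stated $O(1/L')$ remainder, and only the $\ln k$ depth bonus lies above that error level.
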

\noindent The derivation is in \ref{app:penalty}; numerical
verification over 33 $(r,s,\eps)$ cells places \eqref{eq:penalty} within
0.6--8.2\% (median $\approx$1.5\%) of the exact discrete optimum, improving
as $\eps\to0$ in every cell, whereas the classical asymptote
$\rho_\infty$ alone overstates the penalty by 16--43\%.
The bracket is the squared ratio of effective depths: the optimum
collects a $\ln k$ \emph{depth bonus} (its per-level error $\eps/k$
shrinks as levels activate) that no fixed-ratio schedule matches. Hence $s=r$ is optimal, and halving is optimal if and only if $r=\tfrac12$
(e.g.\ at $r=0.75$, $\eps=10^{-8}$, halving inflates $\ln|U|$ by $1.88\times$, where the asymptotic formula would say $2.41\times$). Relative to prior work this is a sharpened, two-sided, finite-$\eps$ closed form, for this
construction, of the known rate-balancing phenomenon
\cite{griebel2000,bungartz2004,nobile2008,gopal2019,xiang2023};
the dimension-independence of $|U|$ itself is the classical
weighted-space regime
\cite{sloan1998,novak2008,gnewuch2011,cohen2011,kolmogorov1959}.

\paragraph{Bias.}

\begin{lemma}[Rounding bias]\label{lem:bias}
Let $f\in C^{2}$, let the query density be $C^{1}$, and let $u$ be the
nearest gridpoint to the query $x$ in a cell of per-coordinate widths
$\delta_i$. Then
\[
\mathbb E\!\left[f(x)-f(u)\right]
=\sum_i \frac{\partial^2 f}{\partial x_i^2}(u)\,\frac{\delta_i^{2}}{24}
+\text{h.o.t.},
\]
one order smaller than the worst-case error $\sum_i g_i\delta_i/2$.
\end{lemma}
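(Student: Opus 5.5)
The plan is to Taylor-expand $f$ about the gridpoint $u$, average over the query's position within the Voronoi cell of $u$, and let the symmetry of the cell cancel the first-order term. Write $h=x-u$, so that inside the cell of $u$ each $h_i$ ranges over $[-\delta_i/2,\delta_i/2]$. Since $f\in C^2$,
\[
f(x)-f(u)=\sum_i \frac{\partial f}{\partial x_i}(u)\,h_i+\tfrac12\sum_{i,j}\frac{\partial^2 f}{\partial x_i\partial x_j}(u)\,h_ih_j+o(|h|^2).
\]
The expectation is taken over the conditional law of $x$ given that it rounds to $u$. I will treat the cell as the box $\prod_i[u_i-\delta_i/2,u_i+\delta_i/2]$. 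For a tapered grid the cell is locally such a box, provided the spacing varies smoothly. The monotone-simplex boundary affects only a lower-order fraction of cells, and I would note this as part of the higher-order terms.

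\textbf{First-order term.} Write the query density as $\rho(u+h)=\rho(u)+\nabla\rho(u)\cdot h+o(|h|)$. The conditional density on the cell is then uniform plus a linear correction of relative size $O(\delta)$. Under the uniform part, $\mathbb E[h_i]=0$ by symmetry of the box. The linear correction gives $\mathbb E[h_i]=\frac{\partial_i\rho(u)}{\rho(u)}\,\frac{\delta_i^2}{12}+o(\delta^2)$. Multiplied by $\partial_i f(u)$, this is still of second order.

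\textbf{Second-order term.} Under the uniform part, the cross terms $\mathbb E[h_ih_j]$ with $i\neq j$ vanish by independence and symmetry. The diagonal terms give $\mathbb E[h_i^2]=\delta_i^2/12$, so the Hessian term contributes $\tfrac12\,\partial_i^2 f(u)\,\delta_i^2/12=\partial_i^2 f(u)\,\delta_i^2/24$, as stated. The density correction only alters these moments at order $\delta^3$.

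\textbf{Main obstacle.} The difficulty is the density-gradient contribution $\partial_i f\,\partial_i\rho\,\delta_i^2/(12\rho)$ from the first-order term. It has the same order as the displayed Hessian term, so it cannot simply be absorbed into ``h.o.t.''. I would try two routes, in this order. First, I would check whether the intended reading is that $u$ is the cell centre and the density is locally constant across cells, i.e.\ the leading term of a locally uniform query model. In that case the term genuinely drops, because it is $O(\delta\,|\nabla\rho|/\rho)$ relative to the stated term. Second, failing that, I would state it explicitly as a companion $O(\delta_i^2)$ term and observe that it is still one order below the worst case.

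Either way, the comparison with the worst-case error is the final step. Each surviving term is $O(\delta_i^2)$, whereas $g_i\delta_i/2$ is $O(\delta_i)$. Since $\delta_i\to0$ under refinement, the bias is smaller than the worst-case error by a factor of order $\delta_i$. Combined with the first-order error model of \S\ref{sec:design}, this gives the claimed ``one order smaller''.
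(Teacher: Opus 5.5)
Your proof is the paper's: Taylor-expand at $u$, let the symmetry of a centred box annihilate the linear term under a per-cell-uniform query measure, read off $\mathbb E[h_ih_j]=\delta_{ij}\delta_i^2/12$ to get the $1/24$ coefficient, and note that a density gradient feeds the linear term at second order. Your explicit companion term $\partial_i f(u)\,\partial_i\ln\rho(u)\,\delta_i^2/12$ is a sharper statement than the paper's ``changing only the constant''. In your first route, though, it vanishes only because $\nabla\rho\equiv 0$ on the cell; it is not ``relatively $O(\delta)$'', since you had just shown it is the same order as the Hessian term.

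Two asides should be retracted.
\begin{itemize}
\item One-sided cells cannot be folded into the higher-order terms. Their per-cell bias is first order, e.g.\ $\partial_i f(u)\,\delta_i/4$ on a half-cell $[u_i,u_i+\delta_i/2]$, so, as the paper does, they must be excluded from the lemma rather than absorbed.
\item Smoothly varying spacing off-centres the nearest-gridpoint cell by $O(\delta_i^2)$. This adds another same-order first-moment term, exactly like the density gradient. The paper sidesteps it by implicitly taking $u$ to be the centre of its cell.
\end{itemize}
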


\noindent So averages over many queries converge at the second-order rate, and the aggregate applications of Section~\ref{sec:apps} rely on this. The result is high-resolution quantization folklore in the Bennett lineage \cite{bennett1948}, applied to our structure. It also corrects the empirical ``unbiasedness'' claim of \cite{biswas2015}: the bias is $O(\delta^2)$ rather than zero. Two caveats apply. One-sided cells (the monotone boundary and the truncated tail levels) keep a first-order bias, although under the optimal taper their influence budget is already $O(\eps)$. And the balanced-credit interpolation of \cite{biswas2015} is deterministic, not per-cell symmetric, so Lemma~\ref{lem:bias} is consistent with it without covering it.
The proof is in \ref{app:bias}.

\section{Experiments}\label{sec:experiments}

\emph{Hardware.} Desktop: Intel Xeon E3-1245 v6 (4 cores, 8 threads, 3.7\,GHz), 16\,GB RAM, 1\,TB 7200\,rpm SATA HDD, Linux 6.8. This machine provides the out-of-core setting, in which the 16\,GB of RAM is the constraint of interest. Server: dual AMD EPYC 9335 (128 threads), 499\,GB RAM, enterprise NVMe, g++ 13.3. The server provides the in-RAM scaling, parallel-build and NVMe settings. All code, build scripts,
raw measurement logs, and the script that regenerates the $\alpha$-entmax
attention-row dataset from scratch are available at
\url{https://github.com/tamalrkm/tapered-grid} and permanently archived
with a DOI \cite{artifact2026}.

\subsection{Correctness}
Every index claim in this paper was checked against brute-force
lexicographic enumeration. For the uniform scheme this is exhaustive over
all $(B,d)$ with $B\le8$, $d\le6$ ($6{,}426$ keys), with
$\unrank(\rank(m))=m$ throughout; for the warped scheme, exhaustive on the
selective configuration of \cite{biswas2015} ($6{,}501$ keys) and on $30$
random nested configurations. The closed form reproduces both worked
examples of \cite{biswas2015} (indices $328$ and $400$) and its production
file size, $\binom{29}{14}=77{,}558{,}760$. On a Plackett--Luce grid of
$54{,}264$ points, the address computed by \eqref{eq:rank} agreed with a
hash memo on every gridpoint. In the C++ experiments the tapered and hash
implementations produced bit-identical checksums at every scale, and the
out-of-core run verified all $5.57\cdot10^9$ stored payloads
(\S\ref{sec:ooc}).

\subsection{At-scale head-to-head (in memory)}\label{sec:headtohead}
\begin{table}[ht]
\caption{Desktop head-to-head: $B{=}16$, $d{=}13$, $N{=}37.4$M, 3M queries, one query stream, identical checksums; every number is the median of three runs on an idle machine (run-to-run spread under 4\% for all rows but \texttt{unordered\_map}, 7\%). The open-addressing
table has inline 64-bit keys, linear probing, a splitmix finalizer and a
power-of-two capacity with load $\le 7/8$ (here $0.56$); its slot layout is
that of \texttt{absl::flat\_hash\_map<u64,float>}. Peak RSS includes the
156\,MB query buffer in every row.}
\label{tab:headtohead}
\centering\small
\begin{tabular}{lccc}
\toprule
& query & peak RSS & build \\
\midrule
index arithmetic alone (no access) & 22.6 ns & --- & --- \\
tapered $O(d)$, array in RAM & 81.3 ns & \textbf{294 MB} & \textbf{0.04 s} \\
tapered $O(d)$, mmap'd file & 98.3 ns & \textbf{294 MB} & 1.20 s \\
node-count tables \cite{biswas2015} (pinned), mmap'd & 81.0 ns & 294 MB & 1.34 s \\
open-addressing flat table & \textbf{53.9 ns} & 1175 MB & 4.28 s \\
\texttt{std::unordered\_map} & 285.7 ns & 1675 MB & 6.99 s \\
\bottomrule
\end{tabular}
\end{table}
Table~\ref{tab:headtohead} is the desktop head-to-head comparison. Three observations follow from it.

\emph{Decomposition.} The index arithmetic by itself costs 23\,ns at $d{=}13$ (first row, where the ranks are computed and nothing is read). A query against the array in anonymous memory costs 81\,ns, so a query consists of the index plus one memory access of roughly 60\,ns. The file-mapped variant pays a further 17\,ns for 4\,KB pages and minor faults, which is the cost of being able to exceed RAM and has nothing to do with the index; the
\texttt{-nohuge} controls in the artifact put the transparent-huge-page
component at 10--12\,ns for both this structure and the flat table.

\emph{The strongest simple baseline.} An open-addressing table with inline
keys is faster to query than the tapered array once both exceed cache:
54\,ns against 81\,ns, a factor of $1.5$. The decomposition says why.
Both structures make one memory access per query, and the flat table's
hash costs about 2\,ns where the rank costs 23. The values-only array therefore pays its $O(d)$ index on top of the same single access, and against an inline-key flat table that shows up as a deficit of 12--27\,ns across the configurations of Tables~\ref{tab:headtohead} and~\ref{tab:desksweep}. The array's advantage over every key-storing structure lies in memory and construction. It uses $4.0\times$ less peak RSS than the flat table here,
$7.2\times$ less counting only the structures (a 143\,MB array against a
1024\,MB table), $5.7\times$ less than \texttt{unordered\_map}, and the
gap widens without bound in $N$ (\S\ref{sec:scaling}). It builds
$100\times$ faster than the flat table, because a values-only array is
filled sequentially by rank while any hash table must enumerate and insert
its keys.

\emph{The chained hash.} \texttt{std::unordered\_map} pays at least two
dependent memory accesses plus a $d$-tuple hash. The array beats it by $2.9\times$ here and by $1.1$--$1.8\times$ on the server
(\S\ref{sec:scaling}). We retain it because it is the container most practitioners would try first; we added the flat table because it is the comparison a careful reader would require.

\emph{The original index.} The table-free rank matches the node-count
index of \cite{biswas2015} on query latency (the small residual comes from the pinned sub-range's better locality rather than from the method) while dispensing
with its per-instance preprocessing and its pinned-coordinate restriction.
The gain over the original lies in construction and generality; query speed is unchanged.

\begin{table}[ht]
\caption{Desktop sweep at fixed $d{=}10$, 2M queries: query latency (ns, median of three runs) and peak RSS (MB), array in RAM. Structure sizes: array $4N$ bytes; flat
table 16 bytes per slot at the stated capacity.}
\label{tab:desksweep}
\centering\small
\begin{tabular}{rcccccc}
\toprule
$N$ & \multicolumn{2}{c}{tapered} & \multicolumn{2}{c}{flat table} & \multicolumn{2}{c}{\texttt{unordered\_map}} \\
 & ns & MB & ns & MB & ns & MB \\
\midrule
92K & 19.7 & 79 & 23.5 & 81 & 39.0 & 83 \\
1.14M & \textbf{25.2} & 83 & 43.5 & 111 & 131.5 & 126 \\
8.44M & 63.0 & 111 & \textbf{51.0} & 335 & 164.8 & 422 \\
44.4M & 76.9 & 248 & \textbf{55.3} & 1103 & 185.9 & 1879 \\
\bottomrule
\end{tabular}
\end{table}
\begin{figure}[ht]
\centering
\includegraphics[width=\linewidth]{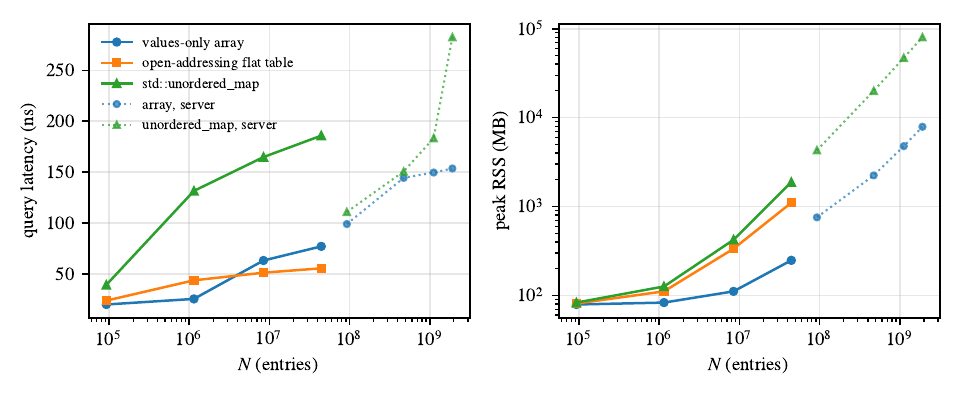}
\caption{Query latency (left) and peak RSS (right) against $N$ at fixed
$d{=}10$. Solid lines: desktop, array in RAM, median of three runs; the
RSS floor of about 80\,MB is the query buffer. Dotted lines: the server
sweep of Table~\ref{tab:scaling}, which measured only the array and the
chained map. The array is fastest while it fits cache and the flat table
does not, then settles 12--27\,ns behind it; its memory stays a widening
factor below both.}
\label{fig:scaling}
\end{figure}

Table~\ref{tab:desksweep} and Figure~\ref{fig:scaling} add a cache-band
observation. At $N{=}1.14$M
the $4.6$\,MB array fits this machine's 8\,MB L3 while the 32\,MB flat
table does not, and the array is then $1.7\times$ faster. Being
$7\times$ smaller is a latency advantage wherever it moves the structure
up one level of the memory hierarchy. We report this as a regime rather than as a general claim. Beyond it the flat table's advantage, 12\,ns at $N{=}8$M and 22--27\,ns at larger $N$, is the steady state.

We do not report a spatial-index baseline. A $k$-d tree or
similar structure answers \emph{nearest-neighbour} queries, which is a
strictly harder problem than the one we face: because the grid is a
product of per-coordinate spacings, the nearest gridpoint is obtained in
closed form by rounding each coordinate independently, in $O(d)$, with no
search. Benchmarking against a structure built for unstructured point sets
would therefore measure the cost of generality we do not need, and at
$d=12$--$15$ such indices degrade towards linear scan in any case. The
relevant comparison is the one we make: against containers that answer the
same exact-key lookup.

\subsection{\texorpdfstring{Scaling in $N$}{Scaling in N}}\label{sec:scaling}
We fix the query dimension at $d{=}10$ and sweep $B$, so that only $N$ changes, with 10M queries per point. The sweep reaches $N{=}1.92$B on the server, $51\times$ the largest in-RAM configuration of Table~\ref{tab:headtohead}. Table~\ref{tab:scaling} reports it:

\begin{table}[ht]\caption{In-RAM scaling, fixed $d{=}10$, server.}
\label{tab:scaling}\centering\small
\begin{tabular}{lccccc}
\toprule
$N$ & tapered ns & hash ns & lat.\ ratio & tapered RSS & hash RSS \\
\midrule
92.6M  & 98.9  & 110.8 & 1.12$\times$ & 0.74\,GB & 4.19\,GB \\
473M   & 144.3 & 150.6 & 1.04$\times$ & 2.19\,GB & 19.5\,GB \\
1.12B  & 149.6 & 183.4 & 1.23$\times$ & 4.66\,GB & 45.8\,GB \\
1.92B  & 153.6 & 282.8 & 1.84$\times$ & 7.70\,GB & 78.4\,GB \\
\bottomrule
\end{tabular}
\end{table}

The memory ratio grows monotonically and without bound: it rises from $5.7\times$ to $10.2\times$ over this range, and is unbounded in
principle, since the array carries no keys. Query latency against the chained map requires more care. At fixed $d$ the tapered cost is a single $O(d)$ computation plus one memory access and stays essentially flat in $N$ ($99\!\to\!154$\,ns), while the chained map grows ($111\!\to\!283$\,ns) as it becomes DRAM-bound, so the ratio widens with $N$ at fixed $d$ (to $1.8\times$ here). The inline-key flat table of Table~\ref{tab:headtohead} was measured only on the desktop; since it also makes one access per query at any $N$, its latency should track the array's less the index cost at every point of this sweep, and its footprint of about 24 bytes per entry at load $0.66$ would sit $4$--$7\times$ above the array's throughout. The absolute ratio is machine-dependent
($1.1$--$1.8\times$ across our two platforms) and noisy on this shared machine (tapered latency showed $20$--$30\%$ run-to-run variance under a concurrent
tenant; reported as the median of 3 runs, hash single runs). The chained map never exhausted memory on this 499\,GB machine ($78$\,GB peak); the regime in which a key-storing structure cannot be built is the out-of-core experiment on the desktop (\S\ref{sec:ooc}) rather than this sweep.

\subsection{\texorpdfstring{Out of core: $5.57$ billion entries on a 16\,GB desktop}{Out of core: 5.57 billion entries on a 16 GB desktop}}
\label{sec:ooc}
The configuration is $B{=}22$, $d{=}15$, giving $N=\binom{36}{15}=5{,}567{,}902{,}560$ entries and a 22.27\,GB value file, $1.4\times$ the machine's RAM, on a rotational disk, the harshest storage case. The sequential build took 235\,s (24\,M entries/s, disk-bound).
A full-file scan verified \emph{every} payload (0 mismatches) at 28\,M entries/s. This is the streaming-aggregate mode of use, roughly $10^3$ times the random-access rate, with every argument implicit in its index. Random point
queries (20{,}000, \texttt{mmap} + \texttt{madvise(RANDOM)}): mean
17.0\,ms, p50 15.5\,ms, p90 25.1\,ms; 89.8\% of queries went to disk with
mean 18.9\,ms, consistent with one seek per query. The $O(d)$ index arithmetic is not measurable, and there is no second dependent access. For contrast, the chained map's measured 43.5\,B/entry footprint would require $\approx$242\,GB of RAM at this $N$ ($15\times$ the machine), and the inline-key flat table at its measured 24\,B/entry would need $\approx$134\,GB ($8\times$); an on-disk chained layout would pay $\ge2$ dependent seeks per query on a $4.5\times$ larger file (hence a $\sim4.5\times$ smaller cached fraction), and an on-disk flat table would pay one seek on a $6\times$ larger file. In this regime the comparison is no longer about speed: the values-only array is the only one of the three that can be built.

On the server's enterprise NVMe (same query protocol, $N{=}1.12$B,
$4.48$\,GB file, pages evicted via \texttt{posix\_fadvise(DONTNEED)}), a cold random query costs p50 67\,\textmu s, mean 71\,\textmu s;
an immediate warm re-run collapses to $\approx$1\,\textmu s
($\approx$70$\times$), confirming that the cold pages were indeed on disk. Against the HDD's 15.5\,ms p50 this is a roughly $230\times$ media speedup for the identical
one-access-per-query structure. Combining with
\S\ref{sec:parallel}--\S\ref{sec:scaling} yields a three-tier first-touch latency hierarchy at $N\!\sim\!10^9$: resident array
$\approx$100--280\,ns; page-cache-resident file, fresh \texttt{mmap}
(minor fault only) $\approx$1\,\textmu s; evicted, true NVMe read
$\approx$67--85\,\textmu s. Each tier is one access, and the taper's $O(d)$ arithmetic is never the bottleneck.

\paragraph{Locality under a perturbation workload.}
Lemma~\ref{lem:prefix} predicts that queries which differ in a tail
coordinate land on the same page and queries which differ in a leading
coordinate do not, while a hash-addressed layout scatters both. We tested
this on the same 22.27\,GB file with a random walk in key space: from a
random sorted key, each step changes one coordinate, chosen uniformly, by
$\pm1$ subject to monotonicity, for 20{,}000 steps. The same walk was
addressed two ways, by the lexicographic rank and by a 64-bit mix of the key, as any hash-addressed layout would. The file's pages were
evicted before each run (0.00\% resident, checked with \texttt{mincore}),
and a hit was decided by calling \texttt{mincore} on the target page before the access rather than by timing it. Table~\ref{tab:walk} gives the result.

\begin{table}[ht]
\caption{Random-walk queries on the 22.27\,GB out-of-core file, 20{,}000
steps, page cache cold at start. Hit rates by the coordinate the step
changed (1 = leading); ``same page'' is the fraction of steps whose target
page equals the previous step's. Two seeds for the lexicographic walk
agreed to within a point; seed 42 shown. Overall: hit rate 40.4\% against 6.7\%, mean latency 3.9\,ms against 8.2\,ms, median 0.24\,ms against 8.1\,ms.}
\label{tab:walk}
\centering\footnotesize
\begin{tabular}{lrrrrrrrr}
\toprule
& \multicolumn{8}{c}{coordinate changed} \\
& 1--8 & 9 & 10 & 11 & 12 & 13 & 14 & 15 \\
\midrule
lex.\ hit \% & 11 & 18 & 34 & 65 & 93 & 99 & 100 & 100 \\
lex.\ same page \% & 0 & 8 & 25 & 61 & 92 & 98 & 100 & 100 \\
hashed hit \% & 7 & 7 & 8 & 7 & 6 & 7 & 6 & 7 \\
hashed same page \% & 0 & 0 & 0 & 0 & 0 & 0 & 0 & 0 \\
\bottomrule
\end{tabular}
\end{table}

\begin{figure}[ht]
\centering
\includegraphics[width=.8\linewidth]{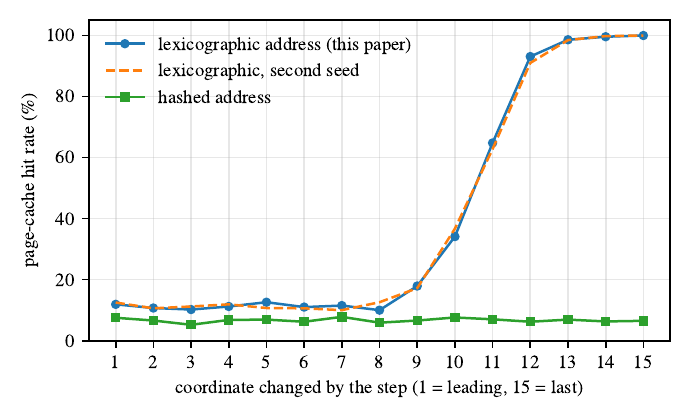}
\caption{Page-cache hit rate of each random-walk step on the 22.27\,GB
out-of-core file, by the coordinate the step changed, for the same walk
addressed lexicographically and by a hash of the key. Cold cache at the
start of each run; hits decided by \texttt{mincore} before the access.}
\label{fig:locality}
\end{figure}

The per-coordinate rows, drawn in Figure~\ref{fig:locality}, show the
lemma directly. Under lexicographic
addressing a step in any of the last four coordinates stays on the same
4\,KB page at least 92\% of the time, a step in coordinate 11 about
three-fifths of the time, and a step in the leading eight coordinates
never; the residual 11\% hit rate there is the cache warmed by earlier
steps. Under hashed addressing every coordinate behaves alike, at 6--8\%, the fraction of the file the page cache happened to hold, and no two
consecutive steps ever share a page. Overall the walk hit the cache
$6\times$ more often (40\% against 7\%), touched $1.6\times$ fewer
distinct pages, and finished in half the wall-clock time. Misses were also cheaper, 4.0--6.6\,ms against 8.8\,ms, because nearby pages on a rotational disk mean shorter seeks. None of this is a property of the workload, which perturbs every coordinate equally often. It is a property of the addressing, which writes the tail coordinates, the ones the taper resolves most coarsely and that matter least to $f$, into the low-order digits of the address.

\subsection{Parallel construction via unranking}\label{sec:parallel}
On a 128-thread dual-EPYC server we fill tables of
$N=\binom{41}{10}\!\approx\!1.12$B and $\binom{46}{10}\!\approx\!4.08$B
entries; worker $j$ owns the index range $[jN/T,(j{+}1)N/T)$ with zero coordination, no locks and no shared state, because $\unrank$ recovers each entry's argument from its index alone. Table~\ref{tab:parallel} reports two variants: a
trivial-payload \texttt{flat} fill, and \texttt{unrank}, which reconstructs the column vector for each entry (the realistic case). Every build
was verified on $10^6$ random indices (0 errors across all 32 runs).

\begin{table}[ht]\caption{Zero-coordination parallel build,
$N{=}1.12$B ($B{=}32,d{=}10$), \texttt{/dev/shm}. \texttt{unrank} is the
argument-reconstructing variant.}\label{tab:parallel}
\centering\small
\begin{tabular}{lccc}
\toprule
$T$ & \texttt{flat} entries/s & \texttt{unrank} entries/s & \texttt{unrank} speedup \\
\midrule
1  & 1.20e9 & 2.31e8 & 1.00$\times$ \\
4  & 1.02e9 & 6.25e8 & 2.70$\times$ \\
16 & 9.17e8 & 8.86e8 & 3.83$\times$ \\
64 & 9.55e8 & 9.59e8 & 4.15$\times$ \\
\bottomrule
\end{tabular}
\end{table}

The \texttt{flat} fill is already memory-bandwidth-bound at a single thread
($\approx$5\,GB/s of sequential stores), so additional threads help little; the realistic \texttt{unrank} variant is compute-bound at low $T$ and scales to
$\approx$4.1$\times$ by $T{=}64$, at which point it converges onto the same $\approx$9.5e8 entries/s ceiling as \texttt{flat}. With enough threads the $O(d)$ reconstruction is hidden entirely behind the bandwidth limit. The ceiling on this two-socket machine is DRAM bandwidth rather than thread count.
Peak RSS was $3.8$ bytes/entry (the 4-byte payload; \emph{no} per-entry key
overhead).

\emph{Comparison with the hash.} A single \texttt{std::unordered\_map} builds the same $N{=}1.12$B content in $234$\,s using $43.1$\,GB; the parallel hash (16 shard maps, 16 threads, one mutex per shard) achieves only $200$\,s, a $1.17\times$ speedup, with lock and node-allocation contention dominating. The
flat array builds the same content in $0.9$\,s (trivial payload, $T{=}1$) to
$4.9$\,s (argument-reconstructing \texttt{unrank}, $T{=}1$), and in
$1.2$\,s at $T{=}64$: a $48$--$250\times$ build-time advantage. Its peak RSS is $4.2$\,GB against the hash's $43.1$\,GB, a $10.2\times$ memory advantage. Both are build-only peaks, so they sit slightly below the build-plus-query peaks of Table~\ref{tab:scaling}, which are $4.66$\,GB and $45.8$\,GB at the same $N$ and give $9.8\times$. The widest ratio we measure, $10.2\times$, is the $N{=}1.92$B row of that table.
The memory gap is wider than the $5.7\times$ seen at
desktop scale because the hash's fixed per-entry overhead grows with $N$.

\subsection{End-to-end applications}
Reported in Section~\ref{sec:apps}: Plackett--Luce memoization is
$25$--$55\times$ faster than exact Newton at $2.6\cdot10^{-3}$ mean error
(Table~\ref{tab:pl}), with measured bias $0.30$--$0.47\times$ the mean
absolute error on the three finer grids, confirming Lemma~\ref{lem:bias}; $\alpha$-entmax thresholds are
a negative result ($92$\,ns memo vs.\ $35$\,ns exact).

\section{Applications}\label{sec:apps}

\subsection{Plackett--Luce normalization: where memoization wins}
\label{sec:app-pl}
The canonical $f$ of \cite{biswas2015} takes a sorted $x$ with $x_1{=}0$ and returns the $p_1$ that solves $\sum_i p_1^{1/(1-x_i)}=1$, the top-choice probability of a Luce/Plackett model \cite{luce1959,plackett1975}. No closed form exists, so every exact evaluation is an iterative root-find, each iteration of which costs $\ell$ transcendental \texttt{pow} calls. We draw $2\cdot10^5$ queries from the generator of \cite[\S5]{biswas2015} ($b{=}5$, $\ell{=}32$) and compare single-query evaluation in C++ at \texttt{-O3}.

\begin{table}[ht]\caption{Plackett--Luce memoization
($\gamma{=}2$ taper). Speedups are over Newton / bisection.}
\label{tab:pl}\centering\small
\begin{tabular}{lccccc}
\toprule
grid & table & mean err & bias/mean & memo & speedup \\
\midrule
$B{=}12,d{=}8$   & 0.1\,MB & 1.30e-2 & 0.99 & 70\,ns  & $\mathbf{55\times}$ / 406$\times$ \\
$B{=}16,d{=}12$  & 31\,MB  & 3.04e-3 & 0.34 & 117\,ns & $\mathbf{33\times}$ / 244$\times$ \\
$B{=}20,d{=}12$  & 218\,MB & 2.64e-3 & 0.47 & 145\,ns & $\mathbf{26\times}$ / 199$\times$ \\
$B{=}16,d{=}16$  & 620\,MB & 2.65e-3 & 0.30 & 149\,ns & $\mathbf{25\times}$ / 189$\times$ \\
\bottomrule
\end{tabular}
\end{table}

Exact evaluation costs $3.8$\,\textmu s (Newton) or $28$\,\textmu s
(bisection) per query; the memo table answers in $70$--$149$\,ns, a
$25$--$55\times$ speedup over Newton and up to $406\times$ over bisection, against the $\approx$10$\times$ reported in \cite{biswas2015}.
Accuracy saturates near $2.6\cdot10^{-3}$ mean absolute error (the
interpolation floor for nearest-grid rounding; \cite{biswas2015}'s
credit-weighted variant reaches $\approx$$10^{-3}$). The measured bias is $0.30$--$0.47\times$ the mean absolute error on the three finer grids, and an order smaller still for the smoother $f$ of \S\ref{sec:app-entmax}. On the coarsest grid ($B{=}12$, $d{=}8$) it is $0.99\times$, because the $\delta^{2}$ regime has not yet been entered. This is the empirical counterpart of Lemma~\ref{lem:bias}, and it is what justifies the aggregate use case.

\subsection{\texorpdfstring{$\alpha$-entmax thresholds: where it does \emph{not}}{alpha-entmax thresholds: where it does not}}
\label{sec:app-entmax}
It is equally important to report where the structure does not pay.
$\alpha$-entmax \cite{peters2019} computes a threshold $\tau$ over sorted
logits, and accelerating it is the subject of AdaSplash
\cite{goncalves2025}; it is evaluated per row$\times$head$\times$layer
$\times$token, and its domain appears ideal, since entries below $\tau$ have zero influence, so the tapered tail is exact. Our test set is $199{,}741$ pre-softmax attention rows sampled
uniformly over (layer, head, position) from GPT-2 small run on
WikiText-2 (raw), keeping each row's top-$32$ sorted scores; on these the
memo table is accurate ($\tau$ error $3.8\cdot10^{-3}$ mean, bias only $0.005\times$ that) but is slower: $92$\,ns against $35$\,ns for the exact solver.

The reason is the size of the support. For $\alpha{=}1.5$ the threshold has a closed
form over the \emph{active support}, and on real attention rows that support is tiny: median $2$, mean $4.0$, with only $0.88\%$ of rows
reaching the top-32 window. An $O(\text{support})$ closed form cannot be outperformed by an $O(d)$ interpolation plus an $O(d)$ rank plus a memory access. (The memo table does outperform a naive $60$-iteration bisection by $6.4\times$, but that is not a baseline anyone should use.) We note this because a widely quoted
$37.6$\,\textmu s/row figure for single-row \texttt{entmax15} reflects Python and tensor per-call overhead; the same mathematics in C++ takes $35$\,ns.

\paragraph{Scoping rule.} Memoize when the exact evaluation is iterative and transcendental, as for Plackett--Luce at $3.8$\,\textmu s, and not when it is merely ``a solve'', as for entmax at $35$\,ns. Sortedness, rank-decaying influence and an exact tail are necessary but not sufficient; the exact
baseline must itself be expensive.

\section{Related Work}\label{sec:related}

\paragraph{Ranking combinatorial objects.} Assigning consecutive integers
to combinatorial objects in lexicographic order originates with Lehmer's
combinatorial number system \cite{lehmer1964} and was made algorithmic by
Buckles and Lybanon \cite{buckles1977}; Knuth \cite[\S7.2.1.3]{knuth2011}
and Kreher and Stinson \cite{kreher1999} give the modern treatment,
including combinations with repetition via the stars-and-bars bijection,
and Genitrini and P\'epin \cite{genitrini2021} revisit unranking. In its
general form the principle is Cover's enumerative source coding
\cite{cover1973}: an index is the count of admissible continuations. Our
contribution is not this machinery but its identification with the tapered memoization grid, which removes that scheme's tables, extends it to variable per-depth alphabets and supplies the inverse it lacked.

\paragraph{Succinct structures and perfect hashing.} Storing data at the
information-theoretic minimum with fast access is the subject of succinct
and implicit data structures \cite{jacobson1989,raman2002}; rank-indexed
blocks are the device inside \textsc{rrr}-style encodings. For arbitrary
static key sets, retrieval structures achieve $(1+o(1))Nw$ bits without
storing keys \cite{pagh2001,dietzfelbinger2008}, at the price of
randomized construction and no order or inverse. Minimal perfect hashing
carries $\Omega(N)$-bit description lower bounds
\cite{fredman1984,mehlhorn1982}, with tight bounds known for monotone
variants \cite{belazzougui2009,assadi2023} and $\Theta(N\log N)$ for
arbitrary prescribed orders \cite{fox1991}. Our key set escapes these bounds because it is a single parameterized family rather than an arbitrary subset.

\paragraph{Grids for functions with decaying coordinate importance.}
Sparse grids \cite{bungartz2004}, and their dimension-adaptive \cite{gerstner2003} and anisotropic \cite{nobile2008,nobile2016} variants, allocate resolution per dimension according to importance. Griebel and Knapek \cite{griebel2000} determine optimal approximation spaces from a ratio of smoothness exponents, together with a penalty for
choosing that ratio wrongly, which is the closest antecedent of our
Theorem~\ref{thm:penalty} (theirs is one-sided and additive in the order,
on a sparse grid; ours is two-sided, multiplicative on $\log$-cost, at
finite $\eps$, for a full tapered tensor grid). That grid size can be made
independent of the ambient dimension under summable coordinate importance
is the classical weighted-space and tractability theory
\cite{sloan1998,novak2008,gnewuch2011}, with matching $N$-term results for
anisotropic analytic classes \cite{cohen2011} and the $\log^2(1/\eps)$
entropy regime going back to Kolmogorov and Tikhomirov
\cite{kolmogorov1959}. We use this regime and do not extend it. Symmetry alone does not give dimension-independence \cite{weimar2014}; the decay does.

\paragraph{Balancing competing geometric rates.} The phenomenon behind
Theorem~\ref{thm:penalty} is classical: a geometric schedule must match a
geometric target rate, with graceful but real degradation otherwise.
It is the optimal bit-allocation (reverse water-filling) result
of transform coding \cite{huang1963,berger1971,gersho1992}, and it appears
sharply in rational approximation, where the achievable exponent is the
minimum of two competing rates and is maximized when they coincide
\cite{gopal2019,xiang2023}; exponential-sum \cite{braess2005} and
$hp$-refinement analyses have the same structure. Our contribution is a closed form at finite $\eps$ for this construction; the effect itself is not new.

\paragraph{Approximating softmax-family transforms.} The
$\alpha$-entmax transform \cite{peters2019} and its sparse-attention
descendants \cite{goncalves2025} motivate our second application; adaptive truncation
schemes for decoding \cite{nguyen2025} share the sorted-logit domain.
Lookup-table approximation of softmax is well established in inference
hardware, but for scalar transforms, whereas the setting here has a vector argument, monotone and tapered. Our finding for this family is negative and, we believe, useful: when the exact solver is itself
$O(\text{support})$, tabulation does not pay.

\section{Conclusion}\label{sec:conclusion}
A memo table for a function of a sorted argument need not store its keys,
and the version that does not is smaller, faster to build, and usable at sizes where the alternatives cannot be constructed.
Recognizing the tapered memoization grid of \cite{biswas2015} as an
instance of the combinatorial number system turns its index into a
closed-form $O(d)$ computation, removes its preprocessing tables, extends
it to variable per-depth alphabets, and, most importantly, supplies the inverse it lacked. With unranking available the keys need never be
stored, and the memo table becomes a flat array of values addressed by
content. We set out to measure the practical consequences; they are as follows. The footprint is a factor $5.7$--$10.2\times$ below a chained hash memo and $4$--$7\times$ below an inline-key flat table, and both factors widen with $N$. On query latency the array outperforms the chained map and trails the flat table by the cost of its own index, about 23\,ns at $d{=}13$; that is the price of storing no keys, and we have measured it rather than estimated it. The table is built by 64 threads without any locking, at the machine's memory-bandwidth limit, where a sharded hash manages $1.17\times$. And at $22$\,GB on a $16$\,GB desktop it answers each query in exactly one disk access, at a size where no key-storing container can be constructed.

The applications set the limits of the claim as well as supporting it. Memoizing the Plackett--Luce normalization, an iterative transcendental solve, is $25$--$55\times$ faster than Newton's method, beyond the
$\approx$$10\times$ originally reported. Memoizing the $\alpha$-entmax threshold is slower than solving it exactly, because its active support is tiny. The lesson is that a sorted argument with
rank-decaying influence and an exact tail is not sufficient reason to
tabulate: the exact evaluation must itself be expensive.

Several directions remain. A GPU-resident variant would need the index
arithmetic mapped onto warp-level primitives and a batched interpolation
kernel. The taper schedule is geometric by assumption here; learning a
non-geometric schedule from an empirical influence profile, guided by
Theorem~\ref{thm:penalty}, may recover part of the accuracy floor we
observe. Hybrids with retrieval structures could extend key-free storage
to grids whose admissible set is not a clean lattice. Finally, that
accuracy floor comes from nearest-grid rounding; the credit-weighted
interpolation of \cite{biswas2015} reaches roughly $10^{-3}$, and
combining it with table-free indexing is straightforward future work.


\section*{Acknowledgements}
The author thanks Kenneth W. Regan, his co-author on the 2015 paper on
which this work builds.

\appendix

\section{\texorpdfstring{Derivation of Theorem~\ref{thm:penalty}}{Derivation of the penalty theorem}}\label{app:penalty}

\subsection{The water-filling optimum and its depth bonus}
Minimize $\sum_i \ln n_i$ subject to $\sum_i g_i/(2n_i)\le\eps$,
$n_i\ge1$. Substituting $t_i=\ln n_i$ makes the constraint convex in $t$
and the objective linear, so the Karush--Kuhn--Tucker conditions are necessary and sufficient:
on the active set, per-level errors equalize, $g_i/(2n_i)=e^\ast$; this
is the classical reverse water-filling / bit-allocation optimum
\cite{huang1963,berger1971,gersho1992}. With $g_i=r^{\,i-1}$, level $i$
is active iff $g_i>2e^\ast$, giving an active prefix of length $k$
satisfying $r^{k-1}>2e^\ast\ge r^{k}$. Inactive levels contribute the
tail $\sum_{i>k}g_i/2=r^{k}/(2(1-r))$, so
$e^\ast=\big(\eps-\tfrac{r^{k}}{2(1-r)}\big)/k=\tfrac{\eps}{k}(1+O(1/k))$,
and the activation boundary gives
\begin{equation}\label{eq:kfix}
k\,a \;=\; \ln\frac{1}{2e^\ast} + O(1)\;=\;L' + \ln k + O(1),
\end{equation}
the \emph{depth bonus}: because $e^\ast=\eps/k$ decreases as levels
activate, the optimum runs $\ln k$ deeper (in budget units) than a
naive $\eps$-per-level account suggests. Summing,
\[
\ln|U^\ast|=\sum_{i\le k}\Big[\ln\tfrac{1}{2e^\ast}-(i-1)a\Big]
= k\ln\tfrac{1}{2e^\ast}-a\tfrac{k(k-1)}{2}
= \tfrac{a}{2}k^{2}+O(k).
\]

\subsection{\texorpdfstring{Geometric schedule, $s<r$ (refining too fast)}{Geometric schedule, s < r (refining too fast)}}
Per-level errors $\frac{1}{2n_1}(r/s)^{i-1}$ \emph{increase} toward the
truncation depth $m$ (where $n_m=\bar c\in[1,1/s)$), so the total is
dominated by the last active level plus the tail:
$\big(\tfrac{r^{m-1}}{2}\big)\,C_< \,(1+o(1))$ with
$C_<=\frac{1}{\bar c(1-s/r)}+\frac{r}{1-r}$. Feasibility pins
$r^{m-1}=2\eps/C_<$, i.e.
\begin{equation}\label{eq:mfix-lt}
m\,a = L' + \ln C_< + O(1).
\end{equation}
With $\ln n_1=(m-1)b+\ln\bar c$,
$\ln|U_s|=\sum_{i\le m}\big[\ln n_1-(i-1)b\big]
=\tfrac{b}{2}m^{2}+O(m)$.

\subsection{\texorpdfstring{Geometric schedule, $s>r$ (refining too slowly)}{Geometric schedule, s > r (refining too slowly)}}
The error series converges from the first term:
$\sum_i \frac{1}{2n_1}(r/s)^{i-1}=\frac{1}{2n_1}\cdot\frac{1}{1-r/s}
\,(1+o(1))$, so $n_1=C_>/(2\eps)$ with $C_>=\frac{1}{1-r/s}$, i.e.\
$\ln n_1=L'+\ln C_>$, and the truncation depth is $m=1+\ln n_1/b$. The
tail is negligible since $am=(a/b)(L'+\ln C_>)>L'$. Again
$\ln|U_s|=\tfrac{b}{2}m^{2}+O(m)$.

\subsection{Combining}
\[
\frac{\ln|U_s|}{\ln|U^\ast|}
=\frac{\tfrac{b}{2}m^{2}+O(m)}{\tfrac{a}{2}k^{2}+O(k)}
=\frac{b}{a}\cdot\Big(\frac{m}{k}\Big)^{2}\big(1+O(1/L')\big),
\]
and substituting \eqref{eq:kfix} and \eqref{eq:mfix-lt} (resp.\ the
$s>r$ analogue, where $\frac{b}{a}(\frac{m}{k})^2$ produces the
$\frac{a}{b}$ branch) yields \eqref{eq:penalty}. As $\eps\to0$ the
bracket tends to $1$ like $\big(1-\frac{\ln k-\ln C}{L'}\big)^{2}$, logarithmically slowly, which is why the asymptote $\rho_\infty$
overstates the penalty at any practical~$\eps$. \hfill$\square$

\subsection{Numerical verification}
We evaluated \eqref{eq:penalty} against the exact discrete optimum
(computed by numerical water-filling) over $33$ cells:
$r\in\{0.25,0.5,0.75\}$, $s\in\{0.25,\dots,0.9\}$,
$\eps\in\{10^{-6},10^{-8},10^{-10}\}$. Representative values at
$\eps=10^{-8}$ are given in Table~\ref{tab:penaltyverify}.

\begin{table}[ht]\caption{Predicted versus exact penalty. $\rho_\infty$ is
the classical asymptote alone.}\label{tab:penaltyverify}
\centering\small
\begin{tabular}{lcccc}
\toprule
$(r,s)$ & exact & \eqref{eq:penalty} & error & $\rho_\infty$ error \\
\midrule
$(0.75,0.50)$ \emph{halving} & 1.878 & 1.858 & $-1.0\%$ & $+28\%$ \\
$(0.75,0.90)$ & 2.110 & 2.136 & $+1.2\%$ & $+29\%$ \\
$(0.50,0.25)$ & 1.578 & 1.554 & $-1.5\%$ & $+27\%$ \\
$(0.50,0.75)$ & 1.866 & 1.910 & $+2.3\%$ & $+29\%$ \\
$(0.25,0.50)$ \emph{halving} & 1.569 & 1.628 & $+3.8\%$ & $+27\%$ \\
$(0.25,0.75)$ & 3.585 & 3.801 & $+6.0\%$ & $+34\%$ \\
\bottomrule
\end{tabular}
\end{table}

Across all $33$ cells the closed form is within $0.6$--$8.2\%$ (median
$\approx1.5\%$) and improves monotonically as $\eps$ decreases in every
cell (Table~\ref{tab:penaltyverify}), confirming the two-term form; the asymptote alone is off by $+16\%$
to $+43\%$. The residual is the dropped $O(m)$ terms and is largest where
the active depth $k$ is small, i.e.\ for fast decay.

\section{\texorpdfstring{Proof of Lemma~\ref{lem:bias}}{Proof of the bias lemma}}\label{app:bias}
Taylor-expand within the cell of gridpoint $u$:
$f(x)-f(u)=\nabla f(u)\!\cdot\!(x-u)+\tfrac12 (x-u)^{\!\top}H(\xi)(x-u)$
with $\xi$ on the segment. Under an exactly per-cell-uniform query
measure, $\mathbb E[x_i-u_i]=0$ annihilates the first-order term, and
independence across coordinates within the cell gives
$\mathbb E[(x_i-u_i)(x_j-u_j)]=\delta_{ij}\,\delta_i^{2}/12$, so
\[
\mathbb E[f(x)-f(u)]
=\sum_i \frac{\partial^2 f}{\partial x_i^2}(u)\,\frac{\delta_i^{2}}{24}
+o\!\Big(\sum_i\delta_i^{2}\Big).
\]
For a general $C^{1}$ query density $\rho$, the within-cell mean offset
is $\mathbb E[x_i-u_i]=O\big(\delta_i^{2}\,\partial_i\ln\rho\big)$, so
the first-order term also contributes at second order, changing only
the constant. One-sided cells (the monotone boundary and truncated tail levels, where $n_i=1$ and rounding goes to the padded ceiling) retain first-order bias; under the optimal taper those levels have
$g_i\le 2\eps/k$, so their contribution stays within the error budget
but not at the $\delta^{2}$ rate. This is high-resolution quantization
analysis in the Bennett lineage \cite{bennett1948} applied to the
tapered grid; no novelty is claimed for the technique.
\hfill$\square$

\section*{Declarations}

\noindent\textbf{Funding.} This research received no specific grant from any
funding agency in the public, commercial, or not-for-profit sectors.

\medskip\noindent\textbf{Competing interests.} The author declares no
competing financial or non-financial interests. The author is not, and has
not been, an editor or editorial board member of this journal.

\medskip\noindent\textbf{Data availability.} All code, build scripts, raw
measurement logs, and the script that regenerates the $\alpha$-entmax
attention-row dataset are publicly available at
\url{https://github.com/tamalrkm/tapered-grid} under the MIT licence, and
permanently archived on Zenodo with the concept DOI
\texttt{10.5281/zenodo.21675851}, which resolves to the latest release
\cite{artifact2026}. The scripts that generate every figure, table and
reported measurement are included in that archive.

\medskip\noindent\textbf{Ethics approval.} Not applicable. This research
involved no human participants, no animal subjects, and no personal data.

\medskip\noindent\textbf{Patient consent.} Not applicable.

\medskip\noindent\textbf{Clinical trial registration.} Not applicable.

\medskip\noindent\textbf{Permission to reproduce material from other
sources.} Not applicable. All figures and tables are the author's own and
have not been published elsewhere. The attention-row dataset of
\S\ref{sec:app-entmax} is derived from the publicly available WikiText-2
corpus and the GPT-2 model, and is regenerated from scratch by a script in
the artifact rather than redistributed.

\medskip\noindent\textbf{Use of generative AI and AI-assisted technologies.}
During the preparation of this work the author used Claude (Anthropic),
accessed through the Claude Code interface, to draft and revise manuscript
text, to implement the benchmark and analysis programs released with the
artifact, and to execute the experiments reported in
Sections~\ref{sec:experiments} and~\ref{sec:apps}. After using this tool the
author reviewed and edited the content as needed and takes full
responsibility for the content of the publication. All correctness claims
reported here are additionally machine-verified against brute-force
enumeration, and all measurement code and raw logs are publicly available
(Section~\ref{sec:experiments}).

\bibliographystyle{elsarticle-num}
\bibliography{refs}

\end{document}